\newcommand{\poly}{{\rm poly}}
\renewcommand{\a}{\boldsymbol{a}}
\renewcommand{\b}{\boldsymbol{b}}
\renewcommand{\c}{\boldsymbol{c}}
\newcommand{\x}{\boldsymbol{x}}
\newcommand{\y}{\boldsymbol{y}}
\renewcommand{\u}{\boldsymbol{u}}
\newcommand{\z}{\boldsymbol{z}}
\newcommand{\s}{\boldsymbol{s}}
\renewcommand{\t}{\boldsymbol{t}}
\newcommand{\p}{\boldsymbol{p}}
\newcommand{\h}{\boldsymbol{h}}
\newcommand{\D}{\mathcal{D}}
\renewcommand{\P}{\mathcal{P}}
\newcommand{\Z}{\mathbb{Z}}
\newcommand{\C}{\mathcal{C}}
\newcommand{\1}{\mathbf{1}}
\newcommand{\R}{E}
\newcommand{\B}{F}
\newcommand{\lrceil}[1]{\left \lceil #1 \right\rceil}
\newtheorem{theorem}{Theorem}
\newtheorem{proposition}{Proposition}
\newtheorem{property}{Property}
\newtheorem{lemma}[theorem]{Lemma}
\newtheorem{remark}{Remark}
\newtheorem{observation}{Observation}
\theoremstyle{definition}
\definecolor{britishracinggreen}{rgb}{0.0, 0.26, 0.15}
\newtheorem{definition}{Definition}
\newcommand{\np}[1]{{\footnotesize  [\textbf{\textcolor{red}{#1}} \textcolor{red!60!black}{--Nikita}]\normalsize}}
\newcommand{\iv}[1]{{\textcolor{green}{Ilya: #1}}}
\title{Optimal Codes Correcting Localized Deletions}
\author{%
  \IEEEauthorblockN{\textbf{Rawad Bitar}\IEEEauthorrefmark{1},
                    \textbf{Serge Kas Hanna}\IEEEauthorrefmark{1},
                    \textbf{Nikita Polyanskii}\IEEEauthorrefmark{1}\IEEEauthorrefmark{2},
                    and \textbf{Ilya Vorobyev}\IEEEauthorrefmark{2}}\\
                    
  \IEEEauthorblockA{\IEEEauthorrefmark{1}%
                     Technical University of Munich,
                    Munich, Germany,
                    \{rawad.bitar, serge.k.hanna, nikita.polianskii\}@tum.de}\\
                    
  \IEEEauthorblockA{\IEEEauthorrefmark{2}%
                     Skolkovo Insitute of Science and Technology,
                    Moscow, Russia,
                    vorobyev.i.v@yandex.ru}

\thanks{This project has received funding from the European Research Council (ERC) under the European Union’s Horizon 2020 research and innovation programme (grant agreement No. 801434) and from the Technical University of Munich - Institute for Advanced Studies, funded by the German Excellence Initiative
and European Union Seventh Framework Programme under Grant Agreement
No. 291763. N. Polyanskii's work was supported by the German Research Foundation (Deutsche Forschungsgemeinschaft, DFG) under Grant No. WA3907/1-1. Ilya Vorobyev was supported in part by the Russian Foundation for Basic Research through grant no.~\mbox{20-01-00559}.
}
}
\date{\today}
\begin{document}

\maketitle
\begin{abstract}
We consider the problem of constructing codes that can correct deletions that are localized within a certain part of the codeword that is unknown a priori. Namely, the model that we study is when at most $k$ deletions occur in a window of size $k$, where the positions of the deletions within this window are not necessarily consecutive. Localized deletions are thus a generalization of burst deletions that occur in consecutive positions. We present novel explicit codes that are efficiently encodable and decodable and can correct up to $k$ localized deletions. Furthermore, these codes have $\log n+\mathcal{O}(k \log^2 (k\log n))$ redundancy, where $n$ is the length of the information message, which is asymptotically optimal in $n$ for $k=o(\log n/(\log \log n)^2)$.
\end{abstract}
\section{Introduction}
Localized deletions correspond to a specific class of errors in which symbols are deleted in certain parts of a string that are unknown a priori. In this setting, the deletions are localized in a certain window in the string but these deletions do not necessarily occur in consecutive positions. Therefore, localized deletions are a generalization of burst deletions which were studied in~\cite{L67,B94,C14,Sch17,A20}. Localized and burst deletions are experienced in various applications such as DNA-based storage and file synchronization.

The study of constructing codes that can correct deletions goes back to the 1960s. In 1966, Levenshtein~\cite{L66} showed that any code that can correct $k$ deletions can also correct $k$ insertions, and vice-versa. Moreover, he showed that such codes can also correct any combination of at most $k$ insertions and deletions. Levenshtein also derived fundamental limits which show that the optimal number of redundant bits needed to correct $k$ deletions that are arbitrarily located in a binary string of length $N$ is $\Theta(k\log(N/k))$~\cite{L66}. In addition, he showed that the codes constructed by Varshamov and Tenengolts (VT codes)~\cite{VT65} are capable of correcting a single deletion and have asymptotically optimal redundancy\footnote{We say that the redundancy of a code construction is asymptotically optimal if the ratio between the redundancy of the construction and some lower bound, e.g., sphere-packing bound, on the redundancy approaches $1$ as the code length goes to infinity.}.  Several previous works studied the classical problem of constructing binary codes that correct $k>1$ deletions that are arbitrarily located in a string~\cite{H02,B16,G19,Vardy,GC,H19,Chen18,SimaIT,SimaSYS}. For the classical setting, the state-of-the-art results in \cite{Chen18,SimaIT,SimaSYS} give codes with $\mathcal{O}(k\log N)$ redundancy. The results in \cite{SimaIT} and \cite{SimaSYS} apply for constant $k$, while the result in \cite{Chen18} applies for $k\le N^{1-\alpha}$ with $0<\alpha<1$. Furthermore, the code presented in \cite{SimaSYS} is systematic, whereas the codes in \cite{Chen18} and \cite{SimaIT} are non-systematic.

The document exchange problem is a related problem that can also provide useful insights for correcting deletions and insertions. In the document exchange setting, a sender (Alice) holds a string $x$ and a receiver (Bob) holds a string $y$, where the edit distance between $x$ and $y$ is assumed to be bounded by $k$. The goal is to design an efficient document exchange protocol by which Alice can send Bob a small sketch based on its string $x$, which allows Bob to recover $x$. The state-of-the-art results on document exchange in \cite{H19} and \cite{Chen18} give deterministic protocols with sketch size $\mathcal{O}(k\log^2(N/k))$ for any $k<N$. 

A separate line of work has focused on the problem of correcting deletions that occur in a {\em single} burst, i.e., bits are deleted in consecutive positions. For this setting, Levenshtein showed that at least $\log N +k-1$ redundant bits are required to correct  $k$ deletions that occur in consecutive positions~\cite{L67}. In~\cite{L67}, Levenshtein constructed asymptotically optimal codes that can correct a burst of {\em at most} two deletions. \mbox{Cheng {\em et al.}~\cite{C14}} provided three constructions of codes that can correct a burst of {\em exactly} $k>2$ deletions. The lowest redundancy achieved by the codes in~\cite{C14} is $k\log(N/k+1)$. The fact that the number of deletions in the burst is exactly~$k$, as opposed to at most $k$, is a crucial factor in the code constructions of~\cite{C14}. Schoeny {\em et al.}~\cite{Sch17} proved the existence of codes that can correct a burst of exactly $k$ deletions and have at most \mbox{$\log N+(k-1)\log\log N+k-1$} redundancy, for sufficiently large $N$. The authors of~\cite{Sch17} also constructed codes that can correct a single burst of at most $k$ deletions. The redundancy for the latter case is at most \mbox{$(k-1)\log N+\big(\binom{k}{2}-1\big)\log\log N+\binom{k}{2}+\log\log k$} which improves on a previous result by Bours in~\cite{B94}. The construction from \cite{Sch17} has been further improved to a redundancy of $\lceil\log k \rceil\log n +\left(k(k+1)/2-1\right) \log \log n +c_k''$ in~\cite{gabrys2017codes}. Note that in the aforementioned results \cite{C14,Sch17,B94, gabrys2017codes}, the size of the burst~$k$ is assumed to be a constant, i.e., $k$ does not grow with $N$. Recently, the authors of~\cite{A20} constructed optimal codes for the case of correcting a single burst of at most $k=o((\log N)^{1/3})$ deletions. The redundancy of these codes is $\log N+\binom{k+1}{2}\log\log N + c_k$ for some constant $c_k$ that only depends on $k$.

In this paper, we study a more general model, in which at most $k$ deletions are {\em localized} in a single window of size $k$, where the position of this window is unknown a priori. Codes for correcting localized errors have been previously studied for other types of errors such as substitutions~\cite{lin2001error, elspas1962note}, where the term ``burst-error-correcting code" is commonly used. In our work, we consider a model where deletions are restricted to a window of size $k$, but do not necessarily occur at consecutive positions within this window. This model for deletions was first studied in~\cite{Sch17} under the name of ``bursts of non-consecutive deletions". The authors of~\cite{Sch17} proved the existence of codes for the particular cases where $k=3$ and $k=4$, with redundancy at most $4\log N+2\log\log N+6$ for $k=3$, and at most $7\log N+2\log\log N+4$ for $k=4$. Localized deletions were also studied in~\cite{GClocalized} for $k=o(n)$, where $n$ is the length of the information message. The authors in~\cite{GClocalized} presented explicit codes that have a non-zero probability of failure which vanishes asymptotically in $n$ for a uniform i.i.d. message. These codes have redundancy at most $4\log n+1$  for $k=o(\log n)$, and at most $4k+1$ for $\Omega(\log n)\leq k\leq o(n)$; furthermore, the encoding/decoding complexity of these codes is $\mathcal{O}(n^2)$.

The main contribution of this paper is constructing an efficiently encodable and decodable {\em zero-error} code that can correct up to $k=\mathcal{O}(n/\log^2 n)$ localized deletions with \mbox{$N-n=\log n + \mathcal{O}(k\log^2 (k\log n))$} redundancy.  It follows from the converse bounds in~\cite{L67,Sch17} that the redundancy of our code is asymptotically optimal in $n$ for $k=o(\log n/ (\log \log n)^2)$. Furthermore, the encoding and decoding complexity is $n \cdot \poly (k\log n)$. A comparison to the state-of-the-art results on codes for consecutive deletions and classical $k$-deletion correcting codes is given in Section~\ref{sec3}.

The rest of the paper is organized as follows. In Section~\ref{sec2} we introduce important definitions and notations that are used throughout the paper. We present our code construction and explain all the components of our code in Section~\ref{sec3}. Section~\ref{sec4} concludes the paper and lists some open problems for future research.

\section{Preliminaries}
\label{sec2}
 For simplicity of presentation, hereafter one-based numbering is used and log $n$ stands for the base-two logarithm of $n$.  A vector is denoted by bold lowercase letters, such as $\x$,
and the $i$th entry of the vector $\x$ is referred to as $x_i$. The length of a vector $\x$ is denoted by $|\x|$. The set of integers from $i$ to $j-1$, $1\le i< j$, is abbreviated by $[i,j)$ or $[i, j-1]$.
Given a set of indices $I$ and a vector $\x$, we define the vector $\x_I$ of length $|I|$ to be the restriction of $\x$ to coordinates from $I$. For a string $\x\in\{0,1\}^n$, by $B_{k}(\x)$ denote the set of all possible strings that can be obtained from $\x$ by deleting bits indexed by $I=\{i_1,\ldots, i_{k'}\}$ with $1\le k'\le k$ such that the difference between the maximal element and the minimal element of $I$ is at most $k-1$. Further, we call $B_{k}(\x)$ the $k$-\textit{ball} centered in $\x$. For an integer $\ell$, we define the vector $\mathbf{1}^{\ell}$ to be the all one vector of length $\ell$. We define $\mathbf{0}^\ell$ similarly. We call the vector $\mathbf{1}^r$ as a \emph{run} of $1$'s of length $r$. We say a vector $\x \in \{0,1\}^n$ contains a run of $1$'s of length $r$ if there exists $i\in [1,n-r+1]$ such that $(x_i,x_{i+1},\dots,x_{r+i-1}) = \mathbf{1}^r$. 
\begin{definition}
    We say that a code $\mathcal{C}\subseteq\{0,1\}^n$ corrects  deletions localized in a window of length $k$ if  for any two distinct codewords $\x_1,\x_2\in\C$, the corresponding $k$-balls have the empty intersection, i.e., $B_{k}(\x_1)\cap B_{k}(\x_2)=\emptyset$.
\end{definition}

	Given a string $\a\in \Z^n$, we define the Varshamov-Tenengolts (VT) check sum as follows
	$$
	\mathsf{VT}(\a):=\sum_{i=1}^{n}ia_i.
	$$
\begin{definition}
	Let $\mathcal{P} \subset \{0,1\}^m$ be a family of strings of length $m>0$. For a positive integer $\Delta$, $m<\Delta\le n$, we say that a binary string $\x$ of length $n$ is $(\mathcal{P},\Delta)$-dense, if each interval of length $\Delta$ in $\x$ contains at least one string from $\mathcal{P}$, i.e., for each $i\in [1,n-\Delta+1]$ there exist $j \in [i,i+\Delta-m]$ and $\p\in\mathcal{P}$ such that $\p = \x_{[j,j+m)}$. 
\end{definition}

\begin{definition}\label{def::indicator vector}
	Let $\x$ and $\mathcal{P}$ be a binary string of length $n$ and a family of strings of length $m$, respectively. Then, we define the indicator vector $\mathbbm{1}_{\mathcal{P}}(\x)$ of the family $\mathcal{P}$ in $\x$ to be a vector of length $n$ with entries
	$$ \mathbbm{1}_{\mathcal{P}}(\x)_i: = 
	\begin{cases}
	1, \quad \text{if } \x_{[i,i+m)} = \p \text{ for some }\p\in\mathcal{P} \text{ and } i\leq n-m+1, \\
	0, \quad \text{otherwise}
	\end{cases}
	$$
	Further let $n_{\mathcal{P}}(\x)$ be the number of ones in $\mathbbm{1}_{\mathcal{P}}(\x)$. We define $\a_{\mathcal{P}}(\x)$ to be a vector of length $n_{\mathcal{P}}(\x)+1$ whose $i$-th entry is the distance between positions of the $i$-th and $(i+1)$-st $1$ in the string $(1,\mathbbm{1}_{\mathcal{P}}(\x),1)$.
\end{definition}

\begin{definition}
\label{def:family_of_strings}
 We consider $\P\subset\{0,1\}^{m}$ with $m=k +\lrceil{\log k}+4$ to be the family of strings that end with a run of $1$'s of length $\lrceil{\log k} + 4$ and do not have any run of $1$'s of length $\lrceil{\log k} + 4$ in its first $k$ entries. We refer to the string $\mathbf{1}^{\lrceil{\log k}+4}$ as the \emph{marker} and to any string $\p\in\mathcal{P}$ as the \emph{pattern}.
\end{definition}

\begin{property}[Marker-rich strings]\label{prop:pattern_rich}
   A string $\x$ is said to be marker rich if it contains at least one marker $\mathbf{1}^{\lrceil{\log k} + 4}$ per window of length $$B := (\lrceil{\log k}+4)2^{\lrceil{\log k} + 8}\lceil \log n \rceil.$$
\end{property}
\begin{property}[String with distanced markers]\label{prop:marker_sparse}
  A string $\x$ is said to have distanced markers if every substring of $\x$ of length $$R := (k+\lrceil{\log k }+ 4)(\lrceil{\log n} +\lrceil{\log k} + 8)$$ contains at least one substring of length $m = k+\lrceil{\log k }+ 4$ that does not contain any copy of the marker $\mathbf{1}^{\lrceil{\log k}+4}$.
\end{property}
In Proposition~\ref{pr::properties lead to density}, we shall show that any string $\x$ satisfying Property~\ref{prop:pattern_rich} and Property~\ref{prop:marker_sparse} is a $(\P,\Delta)$-dense string with $\Delta: = R+B-m$.

\section{Code construction}
\label{sec3}
 Our main result is summarized in the following statement.
 \begin{theorem}
 For  $k=\mathcal{O}(n/\log^2 n)$, there exists an efficiently encodable and decodable code that maps an arbitrary string of length $n$ into a string of length $N$ and is capable of correcting deletions localized in a window of length $k$. The encoding and decoding
complexity of the proposed code is
$n \cdot \poly (k\log n)$. Furthermore, for $n\to\infty$ the redundancy of this code is
 $$
 N-n= \log n + 
\mathcal{O}(k \log^2 (k\log n)).
 $$
 \end{theorem}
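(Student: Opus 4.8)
The plan is to build the codeword in two layers: a low-redundancy \emph{structural transform} that rewrites the message into a string with regularly spaced synchronization patterns, and a short \emph{sketch} that lets the decoder undo a single localized burst once it has been located. First I would give an efficient invertible map sending an arbitrary length-$n$ input to a string $\x$ that simultaneously satisfies Property~\ref{prop:pattern_rich} and Property~\ref{prop:marker_sparse}; by Proposition~\ref{pr::properties lead to density} the image is then $(\P,\Delta)$-dense with $\Delta=R+B-m$. Such a transform can be realized block-by-block, inserting a marker $\mathbf{1}^{\lrceil{\log k}+4}$ wherever a window of length $B$ lacks one and breaking runs wherever markers are packed too tightly to violate Property~\ref{prop:marker_sparse}; this costs only lower-order redundancy (absorbed below) and runs in time $n\cdot\poly(k\log n)$. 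Density guarantees that $\mathbbm{1}_{\P}(\x)$ carries a $1$ in every window of length $\Delta$, so every entry of the gap vector $\a_{\P}(\x)$ is at most $\Delta$.

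The crux is a localization lemma controlling how a burst perturbs this structure. Because each pattern has length $m=k+\lrceil{\log k}+4$ and the deletions are confined to a window of size $k$, a burst leaves every pattern lying entirely to its left untouched and merely shifts every pattern lying entirely to its right leftward by the \emph{known} amount $d=N-|\y|$, altering the membership of only a bounded block of consecutive patterns in between. The key combinatorial input, read off Definition~\ref{def:family_of_strings}, is that two pattern starts cannot be separated by a distance in the range $[\lrceil{\log k}+4,\,k]$: otherwise the trailing marker of the first pattern would fall inside the marker-free first $k$ coordinates of the second. Hence only $\mathcal{O}(1)$ clusters of patterns, each of bounded size, can meet a width-$k$ window, so $\mathbbm{1}_{\P}(\cdot)$, and therefore $\a_{\P}(\cdot)$, changes only in a short contiguous block of coordinates whose displacement is governed by $d$ and $\Delta$.

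Given the lemma the decoder recovers $\x$ from $\y$ in two stages, and the redundancy splits accordingly. To rebuild the pattern skeleton exactly I would append a single Varshamov--Tenengolts check, namely $\mathsf{VT}$ evaluated on the sequence of one-positions of $\mathbbm{1}_{\P}(\x)$ reduced modulo a quantity of order $n$; since the burst perturbs that sequence only inside one short block, this check pins down the location of the perturbation and identifies the affected window of length $\mathcal{O}(\Delta)=\poly(k\log n)$, contributing the leading $\log n$ bits (any further skeleton redundancy being absorbed into the second term). To repair the content inside that window, which has length $\mathcal{O}(\Delta)$ and has suffered at most $k$ deletions, I would attach a deterministic document-exchange sketch of the original window in the style of \cite{H19,Chen18}; its size is $\mathcal{O}(k\log^2\Delta)=\mathcal{O}(k\log^2(k\log n))$, giving the second term. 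Knowing $d$, the decoder extracts the corrupted window from $\y$, uses the sketch to restore it, reinserts it with the correct shift, and finally inverts the structural transform.

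The principal obstacle is that the appended redundancy must itself survive a burst that happens to land on it, and the decoder must remain \emph{zero-error} for every $\y\in B_{k}(\x)$, not merely with high probability. I would isolate the redundancy field from the payload with marker buffers and protect it with a short self-contained burst-robust inner encoding, arguing that a single width-$k$ window cannot simultaneously destroy the positional check needed to locate the burst and the sketch needed to repair it; the cost of this protection is again $\mathcal{O}(k\log^2(k\log n))$. The delicate part is verifying that the structural transform, the VT localization, and the content sketch compose into a correct inversion for \emph{all} burst patterns of weight at most $k$ in a width-$k$ window, including degenerate cases where the burst straddles two pattern clusters or creates spurious patterns; once this is established, the redundancy count $N-n=\log n+\mathcal{O}(k\log^2(k\log n))$ and the $n\cdot\poly(k\log n)$ complexity follow by summing the three contributions.
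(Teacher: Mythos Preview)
Your outline follows the paper's architecture closely: encode into a $(\P,\Delta)$-dense string, use a VT-style check on the pattern structure to localize the burst, then repair with a document-exchange sketch from~\cite{H19,Chen18}. The structural transform and the localization step are essentially what the paper does (the paper stores $n_{\P}(\x)\bmod 5$ together with $\mathsf{VT}(\a_{\P}(\x))\bmod 6n$, and its analysis in fact only localizes to a segment of length $\mathcal{O}(\Delta^2)$ rather than $\mathcal{O}(\Delta)$, but this does not affect the final redundancy since $\log^2(\Delta^2)=\mathcal{O}(\log^2\Delta)$).

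There is, however, a real gap in your correction step. You write that you would ``attach a deterministic document-exchange sketch of the original window'' and that ``the decoder extracts the corrupted window from $\y$, uses the sketch to restore it''. But at encoding time the encoder does not know which window the burst will hit, so it cannot precompute a sketch of \emph{that} window; and a sketch of the entire string $\x$ would cost $\mathcal{O}(k\log^2(n/k))$, not $\mathcal{O}(k\log^2\Delta)$. The paper resolves this with a trick you do not mention: partition $\x$ into $L$ blocks of length $M=\mathcal{O}(\Delta^2)$, compute the hash $\h_i=h(\x_i)$ from Theorem~\ref{th::genetal edit errors} for every block, and store only $\h_{\text{even}}=\bigoplus_{i}\h_{2i}$ and $\h_{\text{odd}}=\bigoplus_{i}\h_{2i-1}$. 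Because localization confines the damage to at most two \emph{consecutive} blocks $l,l+1$, all other $\h_j$ can be recomputed from $\y$, and the two missing hashes are then recovered from the XORs (one lies in the even sum, one in the odd). Without this even/odd aggregation, your redundancy budget of $\mathcal{O}(k\log^2(k\log n))$ for the sketch is not justified.

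A smaller point: your plan to protect the appended redundancy by a ``burst-robust inner encoding'' is workable but unnecessary. The paper's device is simpler and costs only $k+1$ bits: insert $H_{\text{sep}}=(\mathbf{0}^k,1)$ between $\x$ and the hashes. Since the burst has width at most $k$, it cannot touch both $\x$ and the hashes; a single observed bit at a known position tells the decoder which side was hit, and if the hashes were hit then $\x$ is intact and can be read off directly, so the hashes need no further protection.
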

 We have several remarks and comments illustrating the contribution of our paper. 
 \begin{itemize}[leftmargin=*]
     \item  From the converse bounds~\cite{L67,Sch17}, it follows that the redundancy of such a code, written as $N-n$, has to be at least $\log N + k - 1$. This implies that our construction is asymptotically optimal when $k=o(\log n/ (\log \log n)^2)$, e.g., it holds if $k$ is an absolute constant or $k=\sqrt{\log n}$. 
     \item For the case of consecutive deletions, our construction is asymptotically optimal as the one from~\cite{A20}, but better in terms of complexity and limitations of $k$, namely: $k=o((\log n)^{1/3})$ in~\cite{A20} and $k=o(\log n/ (\log \log n)^2)$ in the proposed code. 
     \item We observe that a $k$-deletion correcting code can be naively used for correcting deletions localized in a window of length $k$. For the regime of 
     sub-polynomial $k=2^{o(\sqrt{\log n})}$,  the best known efficient $k$-deletion correcting code~\cite{H19,SimaSYS} has redundancy $\mathcal{O}(k \log n)$ and a $\poly(n)$ time encoding and decoding algorithms, which are worse than the redundancy and complexities of the suggested construction. 
 \end{itemize}

  Our construction consists of four components. Specifically, we show how to encode a string $\u\in\{0,1\}^n$ into a string $\widetilde \x\in\{0,1\}^N$ with $N = n + \log n (1+o(1)) + 
 \mathcal{O}(k \log^2 (k\log n))$, where 
$$
\widetilde \x := (\x, H_{\text{sep}}, H_{\text{loc}}(\x), H_{\text{cor}}(\x)),\quad \x := E(\u). 
$$
Suppose that $\widetilde \y$ is the channel output if $\widetilde\x$ is transmitted over the channel, i.e., $\widetilde \y\in B_{k}(\widetilde \x)$. 
Let $k':=|\widetilde \x|-|\widetilde \y|$ be the actual number of deletions. Based on $\widetilde\y$, we will show that it is possible to reconstruct $\u$. 
The injective map $E(\u):\{0,1\}^n\to \{0,1\}^{n+k + 3\lrceil{\log k} + 12}$ that encodes a string $\u$ into a $(\P,\Delta)$ string $\x=E(\u)$ will be discussed in Section~\ref{sec::encoding into dense strings}. 
The vector $H_{\text{sep}}\in\{0,1\}^{k +1}$ makes it possible to deduce whether at least one deletion occurs in the part $\x$.
 If at least one deletion occurs in the part $\x$, then the part  $(H_{\text{loc}}(\x), H_{\text{cor}}(\x))$ is error-free. More details on $H_{\text{sep}}$ will be given in Section~\ref{sec::separation}. The hash function $H_{\text{loc}}(\x)$ that requires  $\log n  +\mathcal{O}(1)$ bits and allows us to locate deletions in the erroneous part $\x$  up to an interval of length $\mathcal{O}(\Delta^2)$ will be presented in Section~\ref{sec::locating deletions}. 
Finally, in Section~\ref{sec::correcting deletions}, we describe  the function $H_{\text{cor}}(\x)$  correcting deletions occurred in  $\x$. This part requires a small amount of redundancy $\mathcal{O}(k \log^2 (k\log n))$ since the location of deletions is known up to a small interval. Note that based on $\x=E(\u)$, we can extract the information string $\u$. 

\subsection{Encoding into $(\P,\Delta)$-dense strings}\label{sec::encoding into dense strings}
We explain the injective map $E(\u):\{0,1\}^n\to \{0,1\}^{n+ k + 3\lrceil{\log k} + 12}$  that encodes $\mathbf{u}\in \{0,1\}^n$ into a $(\mathcal{P},\Delta)$-dense string $\x$. The encoding follows the same procedures from~\cite{SimaIT} and is replicated here for completeness. Let $\P\subset\{0,1\}^{m}$ to be the family of strings as defined in Definition~\ref{def:family_of_strings}. Before we explain the construction we prove the following proposition.

\begin{proposition}\label{pr::properties lead to density}
A string $\x \in \{0,1\}^n$ that satisfies Property~\ref{prop:pattern_rich} and Property~\ref{prop:marker_sparse} is $(\P,\Delta=R+B-m)$-dense.
\end{proposition}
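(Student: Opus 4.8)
The plan is to show that an arbitrary interval of length $\Delta = R+B-m$ in $\x$ must contain at least one pattern from $\P$, which is exactly the definition of $(\P,\Delta)$-density. Fix an arbitrary starting index $i\in[1,n-\Delta+1]$ and consider the interval $W := \x_{[i,i+\Delta-1]}$ of length $\Delta = R+B-m$. My goal is to locate, somewhere inside $W$, a length-$m$ window that is a valid pattern: it ends with a marker $\mathbf{1}^{\lrceil{\log k}+4}$ and contains no marker in its first $k$ bits.

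First I would use Property~\ref{prop:pattern_rich} (marker-rich) to guarantee an anchor. Since $\x$ contains at least one marker per window of length $B$, and $W$ has length at least $B$, there is a marker occurrence whose position we can pin down to lie in the prefix $\x_{[i,i+B-1]}$ of $W$; let $s$ be the ending position of such a marker. Next I would invoke Property~\ref{prop:marker_sparse} (distanced markers) to find, within the length-$R$ stretch of $\x$ starting just after this marker, a length-$m$ substring $\x_{[j,j+m-1]}$ that contains \emph{no} copy of the marker at all. The key combinatorial step is then to argue that between the guaranteed marker (from Property~\ref{prop:pattern_rich}) and the guaranteed marker-free block (from Property~\ref{prop:marker_sparse}), one can always slide a length-$m$ window to a position where its last $\lrceil{\log k}+4$ bits form a marker while its first $k$ bits are marker-free — in other words, where a marker first appears as a suffix. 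Concretely, consider the transition: scanning rightward from a marker-free block toward a region guaranteed to contain a marker, there must be a first position $j$ at which a marker terminates; the window $\x_{[j-m+1,j]}$ ending there has its marker as a suffix, and because $j$ is the \emph{first} such terminating position after the marker-free block, the preceding $k$ bits cannot contain a full marker. That window is then a pattern $\p\in\P$.

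The main obstacle is the bookkeeping on the window lengths: I must verify that the offsets consumed by the marker-rich guarantee (up to $B$) plus the marker-free block location (within $R$) plus the width $m$ of the sliding window all fit inside $\Delta = R+B-m$, so that the pattern found actually lies within $W$ and not spilling past its right endpoint. This is why $\Delta$ is defined as $R+B-m$ rather than $R+B$: the subtraction of $m$ compensates for the window width so the terminating marker and its preceding marker-free prefix are fully contained in the interval. I would make this precise by writing explicit inequalities on $j$ relative to $i$, $B$, $R$, and $m$, and checking that $j \le i+\Delta-1$ and $j-m+1\ge i$.

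A cleaner way to organize the same argument is to process $W$ from left to right in blocks. Partition (or slide over) $W$ and track the first index $t\ge i$ at which a marker ends; Property~\ref{prop:pattern_rich} forces $t \le i+B-1$, so such a $t$ exists early. If the $k$ bits preceding position $t-\lrceil{\log k}-4$ happen to be marker-free, we are immediately done. Otherwise a marker appears earlier, contradicting the minimality of $t$ unless we reset to that earlier terminating marker; iterating, and using Property~\ref{prop:marker_sparse} to guarantee a marker-free block of length $m$ within every $R$-window, ensures the iteration terminates with a valid pattern before we exhaust the remaining budget of $\Delta - B \ge R - m$ positions. Assembling these containment inequalities establishes that $W$ contains a pattern from $\P$, and since $i$ was arbitrary, $\x$ is $(\P,\Delta)$-dense.
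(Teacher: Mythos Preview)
Your core mechanism is right and matches the paper's: a pattern arises exactly at the \emph{first} marker that appears after a stretch of at least $k$ marker-free positions, since then the $k$ bits preceding that marker are automatically clean. But the order in which you invoke the two properties is backwards, and this breaks the length accounting.

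You use Property~\ref{prop:pattern_rich} first to find a marker in the initial $B$ positions of $W$, and then Property~\ref{prop:marker_sparse} to find a marker-free $m$-block in the next $R$ positions. At that point you have the configuration $[\text{marker}]\cdots[\text{marker-free block}]$, and your ``concrete'' step then asks to scan \emph{rightward from the marker-free block toward a region guaranteed to contain a marker}. But the only marker you have guaranteed lies to the \emph{left} of the marker-free block, not to the right. To get a marker to the right you would have to apply Property~\ref{prop:pattern_rich} a second time, which costs another $B$ positions and yields a bound of roughly $B+R+B$ rather than $R+B-m$. So as written the argument does not establish $(\P,\Delta)$-density for $\Delta=R+B-m$.

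The paper reverses the order: starting from a position $t_i$, it first applies Property~\ref{prop:marker_sparse} to obtain an index $j^\star$ in $[t_i,t_i+R-m]$ at which a length-$m$ marker-free block begins, and only then applies Property~\ref{prop:pattern_rich} to the window $[j^\star,j^\star+B-1]$ to guarantee a marker somewhere in it. Taking $w$ to be the minimal marker-start position $\ge j^\star$, one gets $w\in[j^\star+k,\,j^\star+B]$, the $k$ positions $[w-k,w-1]$ are marker-free by minimality, and hence a pattern begins at $w-k$. This gives $t_{i+1}-t_i\le (R-m)+B+1=\Delta+1$, which is exactly the needed bound. (The paper phrases the argument as bounding gaps between successive pattern positions rather than fixing an interval $W$; that reformulation is cosmetic.)

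Your ``cleaner'' alternative also has a gap: if $t$ is the first index $\ge i$ at which a marker ends, then the putative pattern window $\x_{[t-m+1,t]}$ may begin to the left of $i$ and hence lie outside $W$. The iteration you describe would then move leftward, not rightward, and Property~\ref{prop:marker_sparse} does not obviously terminate it within the budget. The clean fix is simply to swap the order: apply Property~\ref{prop:marker_sparse} first in the leftmost $R$ positions of $W$, then Property~\ref{prop:pattern_rich} in the $B$ positions following the marker-free block.
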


\begin{proof}
Denote by $t_1,<t_2\dots<t_J$ the locations where $\mathbbm{1}_{\mathcal{P}}(\x)$ is equal to $1$. Define $t_0 = 0$ and $t_{J+1} = n+1$. We need to show that for all $i\in [0,J]$, the following holds $$t_{i+1}-t_i\leq B+R-m+1 = \Delta +1.$$

Since $\x$ satisfies Property~\ref{prop:marker_sparse}, then there exists an index $j^\star \in [t_i,t_i+R-k -\lrceil{\log k} - 3]$ such that for every $\ell \in [j^\star, j^\star + k - 1]$ we have $(x_\ell,\dots,x_{\ell + \lrceil{\log k} + 3}) \neq \mathbf{1}^{\lrceil{\log k} + 4}$. Following Property~\ref{prop:pattern_rich}, there exists an integer $a \in [j^\star+1,j^\star+B]$ such that $(x_a,\dots,x_{a+\lrceil{\log k}+3}) = \mathbf{1}^{\lrceil{\log k} + 4}$. Let $w = \min\{a\geq j^\star : (x_a,\dots,x_{a+\lrceil{\log k} + 3}) = \mathbf{1}^{\lrceil{\log k} + 4}\}$. We now have that $w\neq j^\star$ and $w\leq a \leq j^\star + B$. Therefore, $w\in [j^\star+1,j^\star+B]$. In addition, for every $\ell \in [j^\star,w-1]$ we have $(x_{\ell},\dots,x_{\ell+\lrceil{\log k}+ 3})\neq \mathbf{1}^{\lrceil{\log k}+4}$. By definition of $j^\star$ and $w$, we have that $w-j^\star \geq k$. Since $(x_w,\dots,x_{w+\lrceil{\log k}+ 3}) = \mathbf{1}^{\lrceil{\log k }+4}$, we know that $\mathbbm{1}_{\mathcal{P}}(\x)_{w} = 1$. We can now write%
\begin{align*}
    t_{i+1} - t_i &\leq w-t_i\\
    ~& \leq j^\star + B - t_i\\
    ~& \leq R+B-m+1\\
    ~& = \Delta + 1.
\end{align*}
\end{proof}

We are now ready to explain the construction. We split the explanation into three parts. In the first part, we give an encoding that ensures that the string $\x$ is marker-rich, i.e., satisfies Property~\ref{prop:pattern_rich}. In the second part we give an intermediate encoding that would be needed to distance the markers. In the third part we give an encoder that ensures that the markers in the string $\x$ are distanced so that the string satisfies Property~\ref{prop:marker_sparse}. We show at the end that the resulting string $\x$ satisfies both properties and is thus a $(\P,\Delta = R+B-m)$-dense string.

\paragraph{Marker enriching} This stage consists of identifying windows of size $B$ in the input string $\u$ that do not contain a marker. Every such window of bits (substring) is deleted from $\u$. Then, a compressed version of the deleted substring is appended to the end of the input string together with a marker. We first show that all substrings of length $B$ that do not contain a marker can be compressed.

\begin{observation}\label{obs:compress}
Let $\mathcal{S}$ be the set of strings of length $B$ that do not contain a run of $1$'s of length greater than or equal to $\lrceil{\log k}+4$. Then, every string $\mathbf{s} \in \mathcal{S}$ can be compressed to a string $\widetilde{\mathbf{s}} := \phi(\mathbf{s}) \in \{0,1\}^{B-\lrceil{\log n} - 2\lrceil{\log k} -10}$. Here $\phi$ is an invertible map such that $\phi$ and $\phi^{-1}$ can be computed in $O(B^2)$ time.
\end{observation}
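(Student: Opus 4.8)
The plan is to prove the claim by a counting bound followed by enumerative coding. Write $\ell := \lrceil{\log k} + 4$ for the marker length, so that $\mathcal{S}$ consists exactly of the binary strings of length $B = \ell \cdot 2^{\ell + 4}\lrceil{\log n}$ that contain no run of $1$'s of length $\ell$. The first and main step is to upper bound $|\mathcal{S}|$. Note that $B/\ell = 2^{\ell+4}\lrceil{\log n}$ is an integer, so I can partition the coordinates $[1,B]$ into exactly $B/\ell$ disjoint blocks of length $\ell$. If a string avoids a run of $\ell$ consecutive ones, then in particular none of these disjoint blocks equals $\mathbf{1}^\ell$. Since the blocks are disjoint, a uniformly random string of length $B$ satisfies these $B/\ell$ constraints independently, each with probability $1 - 2^{-\ell}$, whence
$$|\mathcal{S}| \le 2^B \left(1 - 2^{-\ell}\right)^{B/\ell}.$$

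The second step is to show this bound is at most $2^{B - \lrceil{\log n} - 2\lrceil{\log k} - 10}$; since the exponent on the right is an integer, this guarantees a distinct index of that bit length for every element of $\mathcal{S}$. Taking base-two logarithms and using $\log(1-x) \le -x\log e$, it suffices to verify $\tfrac{B}{\ell}\,2^{-\ell}\log e \ge \lrceil{\log n} + 2\lrceil{\log k} + 10$. Substituting $B = \ell \cdot 2^{\ell+4}\lrceil{\log n}$ collapses the left-hand side to $16\lrceil{\log n}\log e \approx 23\lrceil{\log n}$, and since $k=\mathcal{O}(n/\log^2 n)$ forces $\lrceil{\log k}\le \lrceil{\log n}$ for large $n$, the right-hand side is at most $3\lrceil{\log n}+10$. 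Thus the inequality holds with ample margin; the factor $2^{\ell+8}$ built into $B$ is chosen precisely to make this numerical slack comfortable.

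For the last step I would take $\phi$ to be the enumerative (rank/unrank) encoder for $\mathcal{S}$. Fixing the lexicographic order on $\mathcal{S}$, let $\phi(\mathbf{s})$ be the rank of $\mathbf{s}$ written in binary and padded to length $B - \lrceil{\log n} - 2\lrceil{\log k} - 10$; by the count above the rank lies in $[0,\,2^{B - \lrceil{\log n} - 2\lrceil{\log k} - 10})$, so the map is well defined and invertible by unranking. Both directions proceed bit by bit: while scanning the string one maintains the length of the current trailing run of $1$'s and, at each position, counts the number of valid completions of the fixed prefix using the standard linear recurrence for the number of length-$j$ strings with no run of $\ell$ ones. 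Precomputing these $\mathcal{O}(B)$ counts (each an integer of $\mathcal{O}(B)$ bits) and performing $\mathcal{O}(B)$ big-integer additions and comparisons yields the claimed $\mathcal{O}(B^2)$ running time for both $\phi$ and $\phi^{-1}$.

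The main obstacle is the counting step, because the argument must deliver \emph{exactly} the savings $\lrceil{\log n} + 2\lrceil{\log k} + 10$ rather than merely some positive gain. The disjoint-block independence trick is what keeps the estimate both clean and tight enough: using overlapping length-$\ell$ windows would introduce correlations and erode the bound, whereas the exact divisibility $\ell \mid B$ removes any floor losses and lets the constant fall out cleanly.
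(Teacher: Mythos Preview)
Your proof is correct and the counting step is essentially identical to the paper's: both partition the length-$B$ string into $B/\ell$ disjoint blocks of length $\ell=\lrceil{\log k}+4$, use that no block equals $\mathbf{1}^\ell$ to get $|\mathcal{S}|\le(2^\ell-1)^{B/\ell}$, and then bound via $(1-2^{-\ell})^{2^\ell}\le e^{-1}$ to recover the $16\lrceil{\log n}\log e\approx 23\lrceil{\log n}$ saving.

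The only real difference is in the construction of $\phi$. The paper takes the simpler route: map each of the $B/\ell$ blocks to a symbol in $\{0,\dots,2^\ell-2\}$ and then write the resulting base-$(2^\ell-1)$ integer in binary; this is just radix conversion and is trivially $\mathcal{O}(B^2)$. You instead rank/unrank directly within $\mathcal{S}$ using the linear recurrence for run-avoiding strings. Both are valid and meet the complexity claim; the paper's block-wise map is easier to describe and implement (and in fact encodes the strictly larger set of strings whose \emph{aligned} blocks avoid $\mathbf{1}^\ell$), while your enumerative approach is the more general technique and would extend to constraints that do not decompose block-wise. (Minor typo: in your final paragraph you write ``the factor $2^{\ell+8}$ built into $B$''; it is $2^{\ell+4}$, consistent with your earlier computation.)
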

The compression works as follows. Divide every string into $2^{\lceil \log k\rceil + 8}\lceil \log n \rceil$ strings each of length $\lceil\log k\rceil+4$. Since every substring cannot be the all one vector, then it can be represented using a symbol from an alphabet of size $2^{\lrceil{\log k}+4}-1$. Therefore, the string $\mathbf{s}$ can be represented by a string $\mathbf{v}$ consisting of $2^{\lceil \log k\rceil + 8}\lceil \log n \rceil$ symbols each from an alphabet of size $2^{\lrceil{\log k}+4}-1$. The number of bits $n_v$ required to represent $\mathbf{v}$ using a binary sequence is given by
\begin{align*}
    n_v & = \lrceil{\log \left( 2^{\lrceil{\log k}+4} -1 \right)^{2^{\lrceil{\log k} + 8} \lrceil{\log n}} }\\
    & = \lrceil{\log \left(1 - 2^{-\lrceil{\log k}-4} \right)^{2^{\lrceil{\log k} + 8} \lrceil{\log n}} } + \left( (\lrceil{\log k}+4) 2^{\lrceil{\log k}+8} \lrceil{\log n}\right)\\
    & \stackrel{(a)}{\leq} 16 \lrceil{\log n}\log\left(\frac{1}{e}\right) + 1 + B\\
    & \leq B - 23 \lrceil{\log n} + 1\\
    & \stackrel{(b)}{\leq} B - \lrceil{\log n} - 2 \lrceil{\log k} -10.
\end{align*}

Here, (a) follows from the fact that for $x>1$, the function $\left(1-\frac{1}{x}\right)^x$ is increasing in $x$ and $\lim_{x\to \infty}  \left(1-\frac{1}{x}\right)^x = \frac{1}{e}$. And (b) holds because for large values of $n$ the following holds $$23 \lrceil{\log n}\geq \lrceil{\log n} + 2 \lrceil{\log k}+11.$$

\emph{Encoding:} We now present the details of the marker enriching process $T_e:\{0,1\}^n \to \{0,1\}^{n+2\lrceil{\log k}+8}$. For a string $\u\in \{0,1\}^n$ let $T_e(\u)$ be the result of the encoding applied on $\u$ to make it rich with markers. The encoding works as follows. First we set $T_e(\u) = \u$ and use a dummy variable $n' = n$. Then we append two markers, i.e., $\mathbf{1}^{2\lrceil{\log k}+8}$, at the end of $T_e(\u)$. We now check the string $T_e(\u)$ for substrings of length $B$ that do not contain a marker. In other words, we look for an integer $i\in [1,n']$ such that for every $j\in [i,i+B-\lrceil{\log k} -4]$ it holds that $$(T_e(\u)_j,T_e(\u)_{j+1},\dots, T_e(\u)_{j+\lrceil{\log k}+3})\neq \mathbf{1}^{\lrceil{\log k}+4}.$$

If we find such an integer $i$ such that $i\leq n'-B+1$, then we delete the substring $(T_e(\u)_i,\dots, T_e(\u)_{i+B-1})$ from $T_e(\u)$ and append $(i,\phi((T_e(\u)_i,\dots, T_e(\u)_{i+B-1}), 0,\mathbf{1}^{2\lrceil{\log k}+8},0)$ to the end of $T_e(\u)$. The appended index $i$ is encoded using a $\lrceil{\log n}$ binary string. We set $n' = n'-B$.

If we find such an integer $i$ such that $i> n'-B+1$, then we delete the substring $(T_e(\u)_i,\dots, T_e(\u)_{n'})$ from $T_e(\u)$ and append $(i,\phi((T_e(\u)_i,\dots, T_e(\u)_{n'},\mathbf{0}^{i+B-n'-1}), 0,\mathbf{1}^{2\lrceil{\log k} + 8-(i+B-n'-1)},0)$ to the end of $T_e(\u)$. The appended index $i$ is encoded using a $\lrceil{\log n}$ binary string. We set $n' = i-1$. In this case the deleted substring has length $n'-i+1$ so that we do not delete any of the appended bits. The quantity $2{\lrceil{\log k}+8}-(i+B-n'-1)$ is always positive  due to appending a run of $1$'s of length $2{\lrceil{\log k}}+8$ in the initialization step. This guarantees that $i+B-n'-1\leq \lrceil{\log k} +4$, otherwise the considered substring contains $(T_e(\u)_{n'+1},\dots,T_e(\u)_{n'+\lrceil{\log k}+4})$ which is a run of $1$'s of length $\lrceil{\log k}+4$ and therefore is not deleted by the algorithm.

We keep looking for values of $i$ that satisfy the aforementioned constraints and applying the explained transformations until deleting all substrings of $\u$ that do not contain a marker.

\emph{Redundancy:} Appending the run of $1$'s adds $2\lrceil{\log k}+8$ redundant bits. All other operations do not increase the length of the string. Therefore, the overall redundancy is at most $2\lrceil{\log k}+8$ bits. To see this, recall from Observation~\ref{obs:compress} that for any string $\s\in \mathcal{S}$, $\widetilde{\s}=\phi(\s)$ is expressed using $B - \lrceil{\log n} - 2 \lrceil{\log k} -10$ bits. The two appended markers and the two $0$'s take $2 \lrceil{\log k} +10$ bits. The index $i$ is expressed using $\lrceil{\log n}$ bits. When deleting a sequence that starts at position $i\leq n'- B+1$, then the appended sequence has a length exactly $B$. When deleting a sequence that starts at position $i> n'- B+1$, then the appended sequence has a length $B-(i+B-n'-1) \leq B$.

\emph{Correctness:} We show that the encoding ensures that any string of length $B$ in $T_e(\u)$ contains the marker $\mathbf{1}^{\lrceil{\log k}+4}$. First we note that $n'$ is always the split index between the original string and the appended parts. Thus, no bits from the added strings are deleted throughout the process. Having $n'$ be the split index is guaranteed by appending the run of $1$'s in the beginning and only deleting bits with index less than $n'$. In addition, $n'$ is decreased every time a string is deleted. Since $n'$ is always the split index, then all strings before $n'$ satisfy the desired property, otherwise the encoding keeps going. In the appended sequences, if $i\leq n'-B+1$ a marker of desired length is added. If $i> n'-B+1$, then $2\lrceil{\log k}+8-(i+B-n'-1) \geq \lrceil{\log k} + 4$ and a marker of the desired length is added. Moreover, the start index of any two markers in the added strings is at most $B-\lrceil{\log k}-4$ bits apart. Therefore, the string $T_e(\u)$ satisfy the desired properties. 

\emph{Decoding strategy:} The decoding follows a reverse procedure of the encoder. We show that when decoding $T_e(\u)$ the output of the decoder is unique and corresponds to the input string $\u$. The decoder looks at the last bit of the input string. All appended strings in the encoding end with a $0$. Therefore, if the last bit of $T_e(\u)$ is a $0$, there is an appended string. The decoder counts the length of the run of $1$'s before that $0$ to determine the length of the originally deleted string. The length of the deleted string is equal to $B$ if the run is of length $2\lrceil{\log k}+8$. Otherwise, the length of the deleted string can be determined using $2\lrceil{\log k}+8$ minus the length of the run of $1$'s. The decoder deletes the $0$ and the run of $1$'s. Now the decoder can invert the compressed version of the deleted string and insert the original string at position $i$ read from the header of the appended sequence. The decoder deletes the remaining part of this appended string. The decoder repeats this procedure until the last bit of the modified string is $1$. At that point the decoder has reached the run of $1$'s of length $2\lrceil{\log k}+8$ appended in the first step of the encoding. The decoder deletes this run and outputs $\u$. The output is guaranteed to be unique because all appended sequences are encoded using the bijective function $\phi$. The recovered indices of the deleted sequences is computed given the current value of $n$ and is guaranteed to be correct. The decoder ends if and only if the last bit of the string is $1$ which is designed by the encoder to be the correct stopping criteria.

\emph{Complexity:} We claim that the total complexity of the encoding process can be reduced to $\mathcal{O}(nB)$. Let a string $\s$ be equal to $(\u, \mathbf{1}^{2\lrceil{\log k}+8})$. We write the encoded sequence into strings $\t_1$ and $\t_2$. The string $\t_1$ will store the string $\s$ without the deleted substrings. The string $\t_2$ will store compressed versions of the deleted substrings. The resulting string $\x$ is equal to the concatenation of $\t_1$ and $\t_2$. 

Scan the string $\s$ from left to right, copy its symbols into $\t_1$ and count the number of consecutive ones in the string $\t_1$. If we have at least $\lrceil{\log k}+4$ consecutive ones in some position, it means that we have found a marker. Save this position to a special variable. Update this position every time a new marker is found.
If in some position the distance to the end of the last marker is bigger than $B$, then we have found a substring in $\s$ of length $B$ without a marker. Cut the last $B$ symbols from the string $\t_1$, compress them using the map $\phi$ and append the result to the string $\t_2$. Continue scanning of the string $s$. 

It is easy to see that the described algorithm implements the encoding process. The complexity of this procedure is $\mathcal{O}(n)$ since each symbol can be scanned, inserted in $\t_1$, used in compression and appended to $\t_2$ at most one time.

To decode the sequence we can find $n'$ and the partition of $\x$ into $\t_1$ and $\t_2$. Then we write the string $\s$ from right to left by scanning the string $t_1$ from right to left and inserting decompressed parts from $\t_2$ in the proper moments. The decoding also has a linear complexity.  

\paragraph{Intermediate encoding} We describe an intermediate step that takes a string of length $k+\lrceil{\log k }+ 4$ that contains some copies of the markers and deletes all the marker. The presented encoding must be invertible. Thus, we present next the encoding and decoding procedure.


\emph{Encoding:} Consider a string $\c \in \{0,1\}^{k+\lrceil{\log k }+ 4}$ that contains at least one copy of the marker. The encoding $T_d:\{0,1\}^{k+\lrceil{\log k }+ 4}\to \{0,1\}^{k+\lrceil{\log k }+ 3}$ explained next deletes all the markers, i.e., encodes $\c$ into $T_d(\c)\in\{0,1\}^{k+\lrceil{\log k }+ 3}$ such that $T_d(\c)$ does not contain any copy of the marker. The encoding is invertible, i.e., given the encoding $T_d(\c)$ one can recover $\c$.

Similarly to the first presented encoding, $T_d$ consists of a series of deleting and appending strings. First, the encoders assigns $T_d(\c) = \c$. Then, the encoder appends a $0$ to the end of $T_d(\c)$. Now, the encoder finds the smallest value of $i\in [1, k]$ such that $T_d(\c)_i = \dots = T_d(\c)_{i+\lrceil{\log k}+3} = 1$. The encoder deletes $(T_d(\c)_i,\dots, T_d(\c)_{i+\lrceil{\log k+3}})$ from $T_d(\c)$ and appends $(i,0,0)$ to the end of $T_d(\c)$. The value of $i$ is encoded by $\lrceil{\log k}$ bits. The encoder sets $n' = k$. We refer to this as the initialization step.

Subsequently, the encoder looks for the smallest value of $i\leq n'$ such that $T_d(\c)_i = \dots = T_d(\c)_{i+\lrceil{\log k}+3} = 1$. If such an integer $i$ exists, the encoder deletes $(T_d(\c)_i,\dots, T_d(\c)_{i+\lrceil{\log k}+3})$ from $T_d(\c)$ and appends $(i,0,0,0,1)$ to the end of $T_d(\c)$. The encoder changes the value of $n'$ to $n' = n' -\lrceil{\log k} - 4$. The encoder keeps repeating this process until $n' \leq \lrceil{\log k} +3$ or no values of $i\leq n'$ exist for which $i$ is the start of a marker.

This encoding procedure indeed results in a string that is one bit shorter than the original string. The initialization step appends a $0$. Then, it deletes a string and appends another one that is $2$ bits shorter. All other deleting and appending processes delete and append a string of the same length. One can verify that the appended strings are not modified by the encoding due to decreasing $n'$ and the appended $0$ at the initialization step. In addition, the resulting string does not contain any copy of the marker. A string that contains the marker is always deleted. All appended strings are guaranteed to have the bit $\lrceil{\log k}+1$ equal to $0$ which ensures that no marker exists.

\emph{Decoding of $T_d(\c)$:} The decoding is straightforward. The appended string in the initialization step ends with a $0$. All other appended strings end with a $1$. The decoder looks at the last bit of the string input string $T_d(\c)$. If the last bit is a $1$, it skips the $1$ and the appended $0$'s, and gets the integer value of $i$ from the $\lrceil{\log k}$ bits representing $i$. The decoder deletes the appended string and adds a marker at position $i$. This process is repeated until the last bit is $0$. In this case, the decoder skips the appended $0$'s and gets the integer value of $i$ from the $\lrceil{\log k}$ bits representing $i$. The decoder deletes the appended string and adds a marker at position $i$. 
The encoding and decoding processes can be implemented effectively in the same manner as in the marker enriching step. The total complexities of decoding and encoding are linear, i.e. $O( k)$. 

\paragraph{Encoding into $(\P,\Delta=R+B-m)$-dense strings} We present the encoder $$T: \{0,1\}^{n+2\lrceil{\log k} + 8}\to \{0,1\}^{n+ k +3\lrceil{\log k} + 12}$$ that takes $T_e(\u)$ as input and outputs $T(\u)$ that is $(\P,\Delta)$-dense. To do so, the encoding ensures that every substring of $T(\u)$ of length $$R = (k+\lrceil{\log k}+4)(\lrceil{\log n} +\lrceil{\log  k} + 8)$$ contains at least one substring of length $m = k + \lrceil{\log k} + 4$ that does not contain any copy of the marker $\mathbf{1}^{\lrceil{\log k}+4}$. 

The encoding is as follows. Let $T(\u) = T_e(\u)$. The encoder appends $(\mathbf{0}^{k}, \mathbf{1}^{\lrceil{\log k}+4})$ to the end of $T(\u)$. Let $n'=n+2\lrceil{\log k}+8$ (the length of $T_e(\u)$). The encoder searches for an integer $i\leq \min \{n',n+k + 3\lrceil{\log k} + 13 - R \}$ such that for every $j \in [i,i+R-k - \lrceil{\log k}-3]$, there exists an integer $\ell \in [j,j+k - 1]$ satisfying $(T(\u)_\ell,\dots,T(\u)_{\ell +\lrceil{\log k} +3}) = \mathbf{1}^{\lrceil{\log k} + 4}$. If such an integer exists, the encoder splits $(T(\u)_i,\dots,T(\u)_{i+R-1})$ into $(\lrceil{\log n} +\lrceil{\log k} + 8)$ blocks $\b_1, \dots, \b_{\lrceil{\log n} +\lrceil{\log k} + 8}$ of length $k+\lrceil{\log k} + 4$ each. The encoder deletes $\b_2, \dots, \b_{\lrceil{\log n} +\lrceil{\log k} + 7}$ from $T(\u)$ and appends $$(0,T_d(\b_2), \dots, T_d(\b_{\lrceil{\log n} +\lrceil{\log k} + 7}),i+k + \lrceil{\log k}+3, \mathbf{1}^{\lrceil{\log k }+4},0)$$ to the end of $T(\u)$. The appended number $i+k + \lrceil{\log k}+3$ is represented with a binary string of length $\lrceil{\log n}$. The encoder sets $n'=n'-R + 2k + 2\lrceil{\log k} + 8$. The encoder repeats the same process until there is no more values of $i$ that satisfy the desired properties.

In terms of redundancy, the encoding only adds $m$ bits. Every deleted string has the same length of the appended string. This is because the intermediate encoding $T_d$ reduces one bit per encoded sequence. We have $\lrceil{\log n} +\lrceil{\log k} + 6$ encoded sequences. The encoder uses $\lrceil{\log n}$ bits to append $i+k + \lrceil{\log k}+3$. The added marker takes $+\lrceil{\log k} + 4$ bits and the remaining $2$ bits are taken by the added $0$'s. Therefore, the only addition made by this encoding is the initial addition of the string $(\mathbf{0}^{k}, \mathbf{1}^{\lrceil{\log k}+4})$.

Similarly to previous encoding algorithms, the appended sequence in the initialization and the decrease of $n'$ at every round (deleting and appending) guarantee that the encoder does not modify the appended strings.

\emph{Correctness:} We now prove that the resulting string $T(\u)$ satisfies Property~\ref{prop:pattern_rich} and Property~\ref{prop:marker_sparse}. We start by showing by induction over the number of rounds $r$ (searching for an index $i$, deleting and appending a sequence) that $T(\u)$ satisfies Property~\ref{prop:pattern_rich}. At the initial step, $r=0$, the string $T(\u) = (T_e(\u),\mathbf{0}^{k}, \mathbf{1}^{\lrceil{\log k}+4})$ satisfies Property~\ref{prop:pattern_rich} by construction of $T_e$. Hence, the hypothesis holds for $r=0$. Assume that after $r>0$ rounds $T(\u)$ satisfies the desired property. At round $r+1$, after a string is deleted, the blocks $\b_1$ and $\b_{k +\lrceil{\log k} + 8}$ are not deleted. Those blocks both contain the marker. Therefore, after the deletion step the string $T(\u)$ still satisfies Property~\ref{prop:pattern_rich}. All appended strings contain the marker $\mathbf{1}^{\lrceil{\log k}+4}$. The index distance between any two markers in the appended sequences is at most $R-2k - 2\lrceil{\log k} - 8 \leq B - \lrceil{\log k} - 4$. Thus, the appended strings satisfy Property~\ref{prop:pattern_rich} and so does the whole string $T(\u)$.

Now we prove that $T(\u)$ satisfies Property~\ref{prop:marker_sparse}. Due to the properties of the encoder, for any value $i\in [1, \min\{n',n+k + 3\lrceil{\log k} + 13 - R \}]$, there exists some $j\in [i,i+R-k - \lrceil{\log k} - 3]$ such that for every $\ell \in [j,j+k - 1]$ the following holds $(T(\u)_\ell,\dots,T(\u)_{\ell +\lrceil{\log k} +3}) \neq \mathbf{1}^{\lrceil{\log k} + 4}$. Recall that all appended strings satisfy Property~\ref{prop:marker_sparse} and are not deleted by the algorithm. 
We only have to prove the property for $i\in [n'+1, n+k + 3\lrceil{\log k} + 13 - R]$. Since the algorithm does not delete the appended strings, the interval $[i,i+R-1]$ contains the string $(0,T_d(\b_2))$ of length $m$ that comes from an appended string. According to the intermediate encoding $T_d$, the string $T_d(\b_2)$ does not contain the marker. Hence, for $i\in [n'+1, n+k + 3\lrceil{\log k} + 13 - R]$ there exists a string of length $m$ that does not contain the marker. Thus, $T(\u)$ satisfies Property~\ref{prop:marker_sparse} as well.

\emph{Decoding of $T(\u)$:} We show that the decoder outputs $T_e(\u)$ from which we have previously shown that $\u$ can be uniquely determined. The decoding procedure follows the reverse steps done by the encoding. The decoder looks at the last bit of $T(\u)$. If this bit is $1$, then only the pattern $(\mathbf{0}^{k}, \mathbf{1}^{\lrceil{\log k}+4})$ has been added and no intermediate encoding has been done. The decoder deletes the additional pattern and obtains $T_e(\u)$. Otherwise, the last string of length $m(\lrceil{\log n}+\lrceil{\log k} + 6)$ is encoded using the intermediate encoding $T_d(\c)$. The decoder deletes this string from the input string, then determines the strings $T_d(\b_2),\dots, T_d(\b_{\lrceil{\log n}+\lrceil{\log k} + 7})$ encoded using $T_d(\c)$ and inverts them. Then, the encoder reads the index $i+m-1$ and inserts $\b_2,\dots,T_d(\b_{\lrceil{\log n}+\lrceil{\log k} + 7}$ at this position. This step is repeated until the last bit of the input string is $1$.

\emph{Complexity:} The encoding and decoding can be implemented effectively in linear time $\mathcal{O}(n)$ in the similar manner as in the marker enrichment step. Indeed, let a string $\s$ be equal to $(T_e(\u), \mathbf{0}^k, \mathbf{1}^{\lrceil{\log k}+4})$. We write the encoded sequence into strings $\t_1$ and $\t_2$. The string $\t_1$ will store the string $\s$ without the deleted substrings. The string $\t_2$ will store compressed versions of the deleted substrings. The resulting string $\x$ is equal to the concatenation of $\t_1$ and $\t_2$. 

Scan the string $\s$ from left to right, copy its symbols into $\t_1$ and
compute the distance to the last marker and the distance to the last substring of length $m$ without a marker in the similar manner as in the marker enrichment step. If at some moment the distance to the start of the last substring of length $m$ without a marker is bigger than $R$, 
then we delete $R$ last symbols from $\t_1$, append encoding of deleted part to $\t_2$, append the first and the last blocks of the deleted part to $\t_1$, compute the actual distances to the last marker and the last string of length $m$ without a marker. Continue scanning of the string $\s$.
One can see that this algorithm implements the encoding process and works in linear time. 
The decoding can be done in linear time analogously to the decoding in the marker enrichment step.

\subsection{Protecting the hashes}\label{sec::separation}
As previously mentioned, we use the hashes $H_{\text{loc}}(\x)$ and $H_{\text{cor}}(\x)$ to locate and correct potential deletion errors that may occur in $\x$.
Since the deletions could also affect $H_{\text{loc}}(\x)$ and $H_{\text{cor}}(\x)$, we need to protect these hashes and enable their recovery at the decoder in order to be able to correct deletions in $\x$. One way to this is to encode $(H_{\text{loc}}(\x),H_{\text{cor}}(\x))$ using a deletion correcting code. However, given the localized nature of the deletions in our problem, we propose the following simpler solution. 

To protect the hashes, we insert a buffer $H_{\text{sep}}$ of size $k+1$ that separates $(H_{\text{loc}}(\x),H_{\text{cor}}(\x))$ from $\x$. Since the $k'\leq k$ deletions are localized in a window of size at most $k$, a consequence of inserting the buffer of size $k+1$ is that the deletions now cannot affect $\x$ and $(H_{\text{loc}}(\x),H_{\text{cor}}(\x))$ simultaneously. We design $H_{\text{sep}}$ in a way such that we can detect whether the deletions have affected $\x$ or not. Therefore, if we detect that the deletions have affected $\x$, we know that the hashes are error-free so we use them to proceed with our decoding methods explained in the subsequent sections. Otherwise, we detect that no deletions have affected $\x$, then we know that the deletions have only affected $(H_{\text{sep}},H_{\text{loc}}(\x),H_{\text{cor}}(\x))$, so we immediately recover $\x$. Next, we explain how we design this buffer and detect whether the deletions have affected $\x$ or not.

We set the buffer of length $k+1$ to $H_{\text{sep}}\coloneqq (\mathbf{0}^{k},1)$. Consider the transmitted sequence $\widetilde \x$ that is affected by $k'\leq k$ localized deletions resulting in the received sequence $\widetilde \y$ of length $N-k'$.  The buffer $H_{\text{sep}}$ is composed of $k$ zeros followed by a single one, and its position in $\widetilde \x$ ranges from the \mbox{$(|\x|+1)^{st}$}~bit to the \mbox{$(|\x|+k+1)^{st}$}~bit. Let $\widetilde y_{\alpha}$ be the bit in position $\alpha$ in the received string, where \mbox{$\alpha\coloneqq |\x|+k-k'+1$}. 
The decoder observes $\widetilde y_{\alpha}$:
\begin{enumerate}
    \item If \mbox{$\widetilde y_{\alpha}=1$}, then this means that the one in the buffer has shifted $k'$ positions to the left because of the deletions, i.e., all the deletions occurred to the left of the one in the buffer. In this case, we consider that the deletions have affected $\x$ and we detect that $(H_{\text{loc}}(\x),H_{\text{cor}}(\x))$ are error-free.
    \item If $\widetilde y_{\alpha}=0$, then this indicates that the $k'$ deletions occurred to the right of the first zero in the buffer, i.e., $\x$ was unaffected. In this case, the decoder simply outputs the first $|\x|$ bits of the received string. 
\end{enumerate}

\subsection{Locating deletions}\label{sec::locating deletions}
To determine the function $H_{\text{loc}}(\x)$, we use the VT check sum, which is defined over integers, similar to that in \cite{L67,A20}. A key observation of this section is given in the following statement.
\begin{observation}[Localized deletions and $(\P,\Delta)$-dense strings] \label{obs:dense strings and deletions}
Let $\x$ be a $(\P,\Delta)$-dense string with $\P$ being as stated in Definition~\ref{def:family_of_strings}. Then, deletions localized in a window of length $k$ do not destroy nor create more than two patterns from $\P$ in $\x$.
\end{observation}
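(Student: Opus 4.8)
The plan is to follow the support of the indicator vector $\mathbbm{1}_{\P}$ (the set of pattern start positions) through the deletions and to argue that the only genuine changes occur at the two ends of the deleted block. Write $\ell := \lrceil{\log k} + 4$ for the marker length, so each pattern is $k$ marker-free coordinates followed by the run $\1^{\ell}$, and $m = k + \ell$. I would first reduce to the case where all deleted coordinates lie in a single window $W = [p, q]$ with $q - p + 1 \le k$; the deletion then fixes coordinates $[1, p-1]$ and shifts coordinates $[q+1, n]$ left by the number $k'$ of deletions. Any $\x$-pattern whose span $[i, i+m-1]$ is disjoint from $W$ is therefore reproduced verbatim in $\y$ (at $i$ if it lies left of $W$, at $i - k'$ if it lies right of $W$): both ``the last $\ell$ coordinates form a run $\1^{\ell}$'' and ``the first $k$ coordinates contain no run $\1^{\ell}$'' are properties of the merely translated substring. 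Hence a pattern of $\x$ can be destroyed only if its span meets $W$, and a pattern of $\y$ can be created only if its span meets the image of $W$.

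Next I would establish that pattern starts are sparse, the structural heart of the argument. If $i_1 < i_2$ are pattern starts then $i_2 - i_1 \notin [\ell, k]$: for $\ell \le i_2 - i_1 \le k$ the marker $\x_{[i_1 + k, i_1 + k + \ell - 1]} = \1^{\ell}$ of the first pattern sits entirely inside the window $\x_{[i_2, i_2 + k - 1]}$ that must be marker-free for the second, a contradiction. Thus pattern starts occur in clusters of mutual distance at most $\ell - 1$, each cluster sharing one run of ones as its marker, and distinct clusters are at least $k+1$ apart. A companion observation is that the marker of a pattern must begin within the first $\ell$ positions of its run, since otherwise an earlier marker start of the same run falls into the forbidden zone $[i, i + k - \ell]$; so each run contributes pattern starts only in a short window anchored at its beginning. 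Because these pattern-generating runs are $\ge k+1$ apart while $|W| \le k$, at most the runs abutting the two ends of $W$ can be cut, and everything strictly interior to a surviving cluster is carried along rigidly by the shift, so its patterns persist. This confines all destruction and creation to the two ends of the deleted block.

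The delicate part, where I expect the real work, is the bookkeeping at each end. A deletion can shorten a marker run below length $\ell$ (killing its pattern), but it can equally \emph{create} a marker by erasing the zeros separating two shorter runs inside a data window, and the rigid shift may hand a previously marker-free data window a fresh run $\1^{\ell}$ or remove the run that disqualified it. I would argue that each end of $W$ is responsible for at most one such event: to the left the coordinates are untouched and to the right they are rigidly translated, so any newly created or destroyed marker must straddle the single seam made at that end, while the $k+1$ separation of pattern-generating runs forbids a second independent event at the same end. Summing the two ends gives at most two destroyed and at most two created patterns. The main obstacle is excluding a local cascade in which one deletion simultaneously destroys one marker and spawns another a few positions away; here the combination $|W| \le k$ with the $k+1$ separation is exactly what rules out two interacting events, and turning this into an airtight, case-complete statement---including the possibility that a single over-long run carries several pattern starts at once---is the crux the full proof must nail down.
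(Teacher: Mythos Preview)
The paper's proof is essentially two sentences: it asserts that consecutive $1$'s in $\mathbbm{1}_{\P}(\x)$ are at least $k+1$ apart, so a window of length $k$ can touch at most two patterns, and hence at most two can be destroyed (and, by the same reasoning applied to $\y$, at most two can be created). Your proposal takes a much longer route precisely because you \emph{deny} this separation claim: you allow pattern starts to sit within $\ell-1$ of one another (your ``clusters''), and then try to control creation/destruction via a seam analysis at the two ends of the deletion window.

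Here is the difficulty. Under your reading of Definition~\ref{def:family_of_strings}, clustering is indeed possible --- for instance $\x=0^{k-1}\,0\,1^{\ell+t}$ gives $t+1$ adjacent pattern starts --- but then the Observation is simply false: deleting a single bit from the shared marker run shortens it below $\ell$ and destroys all $t+1$ patterns at once, which can exceed two. So the ``cascade'' you flag as ``the crux the full proof must nail down'' is not a bookkeeping technicality; it is a genuine counterexample, and no seam argument can close it. Conversely, the paper's intended reading is that the $\1^{\ell}$ at the end of a pattern is the \emph{first} marker occurrence (equivalently $p_k=0$); under that reading your case $1\le i_2-i_1\le \ell-1$ is vacuous, pattern starts really are $\ge k+1$ apart, and the whole clustering/seam apparatus collapses to the paper's two-line argument. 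Either way, the elaborate analysis you outline is not the path: you should first establish (or accept) the $k{+}1$ separation and then the statement is immediate.
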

\begin{proof}
We note that the distance between 1's in the vector $\mathbbm{1}_{\mathcal{P}}(\x)$ is at least $k+1$, i.e., patterns from $\P$ are located in $\x$ at distance at least $k+1$. Therefore, deletions localized in a window of length $k$ occur in at most two patterns in $\x$ and, thus, do not destroy more than two patterns. Similar arguments work when localized deletions result in creating new patterns in $\x$.
\end{proof}
Now we proceed by describing the encoding and decoding algorithm of the locating procedure.

\textit{Encoding:} Let us define $c_1  := n_{\P}(\x) \pmod{5}$, $c_1\in[0,4]$, and $c_2 := \mathsf{VT}(\a_{\P}(\x)) \pmod{6n}$, $c_2\in [0, 6n-1]$. Assume that there is a standard map from non-negative integers at most $n_1$ to their binary representations of length $\lceil\log n_1 \rceil$. Then we define the function $H_{\text{loc}}(\x) := \left(c_1,c_2\right)$.

\textit{Decoding strategy:}
Utilizing the arguments in Section~\ref{sec::separation}, we can assume that $c_1$ and $c_2$ are known before we start the decoding procedure. Define $\y:=\widetilde \y_{[1,|\x|-k']}$ and note that $\y \in B_{k'}(\x)$. 
We want to locate the positions of the localized deletions up to an interval of length $\mathcal{O}(\Delta^2)$.
 Recall that $a_{\P}(\x)_i \leq \Delta$ as $\x$ is $(\P,\Delta)$-dense. By  Observation~\ref{obs:dense strings and deletions}, we obtain that $a_{\P}(\y)_i\leq 3\Delta$ and $n_{\P}(\x)-2\le n_{\P}(\y)\le n_{\P}(\x)+2$.

Compare the vectors $\a_{\P}(\x)$ and $\a_{\P}(\y)$. We claim that the second vector can be obtained from the first by replacing the substring $\s_1$ by the substring $\s_2$, $0 \leq |\s_1|, |\s_2| \leq 3$, $||\s_1|-|\s_2||\leq 2$. Let $\s_1$ and $\s_2$ be equal $\a_{\p}(\x)_{[b,e_1)}$ and $\a_{\p}(\y)_{[b,e_2)}$ with $e_1:=b+|\s_1|$ and $e_2:=b+|\s_2|$. In other words, the vectors $\a_{\P}(\x)$ and $\a_{\P}(\y)$ can be represented as concatenations in the following way
\begin{align}
    \a_{\P}(\x)&=(\a_{\P}(\x)_{[1,b)},\s_1, \a_{\P}(\x)_{[e_1,n_{\P}(\x)]})\nonumber\\ &=(\a_{\P}(\y)_{[1,b)},\s_1, \a_{\P}(\y)_{[e_2,n_{\P}(\y)]}),\nonumber\\
    \a_{\P}(\y)&=(\a_{\P}(\x)_{[1,b)},\s_2, \a_{\P}(\x)_{[e_1,n_{\P}(\x)]})\nonumber\\
    &=(\a_{\P}(\y)_{[1,b)},\s_2, \a_{\P}(\y)_{[e_2,n_{\P}(\y)]}).\label{eq::view of vectors}
\end{align}
Let $d$ be equal to $|\s_2|-|\s_1|$. Observe that the value $d$ can be computed based on $\y$ as we require $c_1=n_{\P}(\x) \pmod{5}$ and, thus, $d=n_{\P}(\y)-c_1 \pmod{5}$. 
Using the property~\eqref{eq::view of vectors} and the fact $\sum_{i=b}^{e_1-1}a_{\P}(\x)_i - \sum_{i=b}^{e_2-1}a_{\P}(\y)_i = k'$, we compute the difference between the VT checksums for vectors $\a_{\P}(\y)$ and $\a_{\P}(\x)$:
\begin{align*}
D&:=	\mathsf{VT}(\a_{\P}(\y))- \mathsf{VT}(\a_{\P}(\x))\\
&= d\sum\limits_{i=e_2}^{n_{\P}(\y)} a_{\P}(\y)_i -\sum\limits_{i=b}^{e_1-1}ia_{\P}(\x)_i+\sum\limits_{i=b}^{e_2-1}ia_{\P}(\y)_i\\
&=d\sum\limits_{i=e_2}^{n_{\P}(\y)} a_{\P}(\y)_i -b\left(\sum\limits_{i=b}^{e_1-1}a_{\P}(\x)_i+\sum\limits_{i=b}^{e_2-1}a_{\P}(\y)_i\right)\\
&\quad -\sum\limits_{i=b+1}^{e_1-1}(i-b)a_{\P}(\x)_i+
\sum\limits_{i=b+1}^{e_2-1}(i-b)a_{\P}(\y)_i
\\
&=d\sum\limits_{i=b+3}^{n_{\P}(\y)} a_{\P}(\y)_i-bk'+ \R,
\end{align*}
where $\R$ is defined by 
$$
\R:=d\sum\limits_{i=e_2}^{b+2} a_{\P}(\y)_i-\sum\limits_{i=b+1}^{e_1-1}(i-b)a_{\P}(\x)_i+
\sum\limits_{i=b+1}^{e_2-1}(i-b)a_{\P}(\y)_i.
$$
We observe that $e_1, e_2\leq b+3$ and $|d|\leq 2$. Therefore, the first and the last summations in the definition of $\R$ 
has at most three addends with the coefficient $\leq \max(|d|, e_2-b-1)$.
 Therefore, $|\R|$ can be upper bounded as $18\Delta$. Finally, we derive that
$$
D=d\sum\limits_{i=b+2}^{n_{\P}(\y)} \a_{\P}(\y)_i-bk'+ \R,\quad |\R|\leq 18\Delta.
$$
Note that $-3n-18\Delta \le D \le 2n + 18\Delta$. Thus, for $n>36\Delta$, we have $6n>5n+36\Delta$ and it suffices to know $c_2=\mathsf{VT}(\a_{\P}(\x)) \pmod{6n}$ to compute $D$. Define the function
$$
G(s):=d\sum\limits_{i=s+2}^{n_{\P}(\y)} \a_{\P}(\y)_i-sk'.
$$
We observe that $D-\R=G(b)$  and the function $G(s)$ is a strictly monotone function of $s$. Indeed, 
if $d\geq 0$ then $G(s)$ is a strictly decreasing function of $s$. 
    If $d<0$ then using the fact $a_{\P}(\y)_i\ge m>k \ge k'$ we conclude that $G(s)$ is a strictly increasing function of $s$. Since it is easy to compute $\R$ based on $\y$, we make use of the bound $|\R|\le 18\Delta$ and  define the set 
    $$
    \B:=\left\{s:\ G(s)\in [D-18\Delta,D+18\Delta]\right\}
    .
    $$
    Note that the difference between two elements from  $\B$ cannot be larger than $36\Delta$. 
Since each $a_{\P}(\y)\leq 3\Delta$, we can locate a segment of length at most $108\Delta^2+3\Delta=\mathcal{O}(\Delta^2)$, where all deletions are located. Note that computing the vector $\a_{\P}(\y)$, finding the set $\B$ and determining the required interval of length $\mathcal{O}(\Delta^2)$  can be done in $\mathcal{O}(n)$ time.

\subsection{Correcting deletions}\label{sec::correcting deletions}

In this step, we make use of the following result from~\cite{Chen18}.

\begin{theorem}[Theorem 1.3, Remark 1.5 from~\cite{Chen18}]\label{th::genetal edit errors}
For any positive integers $n$ and $k$ with $k \leq n/4$, there exists an explicit code which maps any
message $\x$ of length $n$ into a codeword $(\x, h(\x))$ of length $n + \mathcal{O}(k\log^2(n/k))$ and
corrects up to $k$ deletions and insertions. The complexities of encoding, i.e. computing of function $h$, and decoding algorithms are polynomial in $n$. 
\end{theorem}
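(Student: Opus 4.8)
The plan is to prove the result in two stages: build an explicit \emph{deterministic document exchange} protocol with sketch size $\mathcal{O}(k\log^2(n/k))$, and then convert it into an edit-correcting code. For the document exchange, Alice holds $\x$ and Bob holds a string $\y$ with edit distance at most $k$, and Alice must transmit a short sketch $h(\x)$ from which Bob reconstructs $\x$. I would organize the sketch hierarchically: partition $\x$ into a tree of blocks, starting from $\mathcal{O}(k)$ blocks of length $n/k$ at the top and halving down over $\mathcal{O}(\log(n/k))$ levels. At each level Alice transmits short fingerprints of $\mathcal{O}(\log(n/k))$ bits, but only for the children of blocks that Bob failed to verify at the previous level. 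Since the edit distance is $k$, at most $\mathcal{O}(k)$ blocks per level can contain an edit, so only $\mathcal{O}(k)$ fingerprints are sent per level; summing $\mathcal{O}(k)\cdot\mathcal{O}(\log(n/k))$ bits over $\mathcal{O}(\log(n/k))$ levels yields the claimed sketch length $\mathcal{O}(k\log^2(n/k))$.

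The heart of the protocol is the realignment that Bob performs. For each block whose parent was already verified, Bob searches a window of $\mathcal{O}(k)$ offsets of his own string (the cumulative displacement caused by $k$ edits is at most $k$) and confirms a candidate using the transmitted fingerprint; matched blocks are copied verbatim and unmatched blocks are passed down to the next recursion level. To keep the construction \emph{explicit} rather than randomized, I would replace the random hash family by a deterministic one with few collisions over the relevant candidate set --- for instance an $\epsilon$-biased or synchronization-string based family --- so that distinct candidate substrings receive distinct fingerprints. This derandomization is the technical core of the result and is precisely what \cite{Chen18} provides.

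To obtain the code from the protocol I would use the standard self-referential reduction: let the codeword be $(\x,h(\x))$, where $h(\x)$ is the document-exchange sketch. The difficulty is that this appended redundancy is itself subject to the $k$ edits, so the decoder cannot assume it receives $h(\x)$ intact. I would protect $h(\x)$ by applying the same code recursively to the much shorter string $h(\x)$, separating the layers with buffers of the form $(\mathbf{0}^k,1)$ to localize where each layer ends, exactly as $H_{\text{sep}}$ is used in our main construction. Since $h(\x)$ has length only $\mathcal{O}(k\log^2(n/k))$ and each recursive layer is asymptotically far shorter than the previous one, the recursion terminates after a few rounds, with the innermost short layer protected by a naive brute-force edit code; the total added redundancy is then a rapidly decreasing sum dominated by its first term, preserving the bound $n+\mathcal{O}(k\log^2(n/k))$. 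All fingerprinting, windowed matching, and recursion steps run in $\poly(n)$ time.

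I expect the realignment/synchronization step to be the main obstacle. Fingerprinting blocks is routine, but once insertions and deletions displace the content, Bob's natural block partition no longer coincides with Alice's, and a single edit can in principle cause every downstream block to mismatch. Controlling this cascade --- guaranteeing that only $\mathcal{O}(k)$ blocks per level are flagged, that the correct offset of each verified block is recoverable from a window of size $\mathcal{O}(k)$, and that the deterministic fingerprints never produce a spurious match across that window --- is where the synchronization machinery and its derandomization do the real work. Everything else (the sketch-size accounting, the self-protection recursion, and the $\poly(n)$ complexity bound) becomes routine bookkeeping once this step is in place.
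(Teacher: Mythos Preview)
The paper does not prove this statement; it is quoted verbatim from~\cite{Chen18} (Theorem~1.3 and Remark~1.5) and invoked purely as a black box in Section~\ref{sec::correcting deletions} to build $H_{\text{cor}}(\x)$. There is therefore no ``paper's own proof'' to compare your sketch against.

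Your outline is broadly consistent with the actual approach in~\cite{Chen18} and~\cite{H19} (a hierarchical deterministic document-exchange protocol, then a sketch-to-code reduction), but one step of your reduction would not work as written. You propose to separate the recursive layers of redundancy with buffers of the form $(\mathbf{0}^k,1)$, ``exactly as $H_{\text{sep}}$ is used in our main construction.'' That buffer trick works in the present paper only because the deletions are \emph{localized} in a single window of length $k$: either all of them land to the left of the $1$ in the buffer or none of them do, which is precisely the dichotomy exploited in Section~\ref{sec::separation}. Theorem~\ref{th::genetal edit errors} is about \emph{arbitrary} mixtures of up to $k$ insertions and deletions, which can hit several separated regions at once; a run of zeros then gives no way to decide how many edits fell on each side, and the layer boundaries cannot be recovered this way. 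In~\cite{Chen18} the appended sketch is instead protected directly by a short brute-force edit-correcting encoding (feasible because the sketch is already of length $\mathcal{O}(k\log^2(n/k))$), not by buffer separation. Aside from this point, your high-level plan matches the cited construction.
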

\begin{remark}
We emphasize that Theorem 1.3 from~\cite{Chen18} allows us to encode in a systematic way, i.e., each codeword contains an original message. This property is important for our further analysis. 
Although for our range of parameters Theorem 1.4 from~\cite{Chen18} gives smaller redundancy, the encoding is not systematic.
We note that codes with the same redundancy as in Theorem~\ref{th::genetal edit errors} were obtained in~\cite{H19}.
\end{remark}
 By the argument in Section~\ref{sec::locating deletions}, we can find the segment of length at most $108\Delta^2+3\Delta$ such that all deletions are situated inside this segment. Let $M:=108\Delta^2+3\Delta$. Recall that $\Delta = \mathcal{O}(k \log k \log n)$ as $n\to\infty$. Now we show how to define $H_{\text{cor}}(\x)$.
 
\textit{Encoding:} Append the smallest amount of zeroes to the end of $\x$ to obtain the vector $\hat{\x}$, which length $n'$ is divisible by $M$. Partition a codeword $\hat{\x}$ into $L=n'/M$ parts of length $M$, $\hat{\x}=(\x_1, \x_2, \ldots, \x_L).$ 
For each message $\x_i$, we use the construction from Theorem~\ref{th::genetal edit errors} to compute parity bits $\h_i=h(\x_i)$ to protect them against $2k$ deletions. Note that the length 
of $\h_i$ is $\mathcal{O}(k\log^2( M/(2k)))=\mathcal{O}(k\log^2k+k(\log\log n)^2)$.
 Define $\h_{even}$ and $\h_{odd}$ to be the following bit-wise sums
\begin{align*}
    \h_{\text{even}}&=\sum\limits_{i=1}^{\lfloor L / 2\rfloor}\h_{2i}, & 
    \h_{\text{odd}}&=\sum\limits_{i=1}^{\lceil L / 2\rceil}\h_{2i-1}.
\end{align*}
Finally, we let $H_{\text{cor}}(\x)$ to be $(\h_{\text{even}}, \h_{\text{odd}})$.

\textit{Decoding strategy:}  Since the position of deletions is already located, we can recover $\x_j$ for all $j$ except possibly two consecutive blocks indexed by $l$ and  $l+1$ for some integer $l$.
Append $LM-|\x|$ zeroes to the end of the  string $\y$ to obtain $\hat\y$. Divide $\hat\y$ into blocks $(\y_1, \y_2, \ldots, \y_L)$ such that $|\y_j|=M$ for $j\ne l$ and $|\y_l|=M-k'$. Clearly, $\x_j=\y_j$ for all $j$ except $j=l, l+1$. For $j=l, l+1$, the string $\y_j$ is obtained from $\x_j$ by at most $k$ deletions and at most $k$ insertions. 
Compute $\h_j=h(\y_j)=h(\x_j)$ for $j\ne l, l+1$. Using $\h_{\text{even}}$ and $\h_{\text{odd}}$, we find $\h_l$ and $\h_{l+1}$. Thus, applying the decoding algorithm from~\cite{H19,Chen18}, we can reconstruct $\x_l$ and $\x_{l+1}$.

\emph{Complexity:} We claim that the encoding complexity of this step is $n\cdot  \poly(k\log n)$. Indeed, the complexity of computing  hash $h_i$ is polynomial in $M$ by Theorem~\ref{th::genetal edit errors}, and the total number of such hashes is at most $n$. The decoding complexity 
is equal to the complexity of computing of hashes and the complexity of decoding of two codes of length $\poly(k\log n)$, that results in $n\cdot  \poly(k\log n)$ total complexity.
\section{Conclusion}
\label{sec4}
In this paper, we propose a novel efficient code  of length $n$ capable of correcting deletions occurred in a window of length $k$, where $k=\mathcal{O}(n/\log^2 n)$. The encoding and decoding algorithms of the proposed construction run in $n\cdot \poly(k\log n)$ time. For $k=o(\log/ (\log \log n)^2)$ and $n\to\infty$, the redundancy of this code is $\log n(1+o(1))$, i.e., it is asymptotically optimal. An interesting question is whether this construction can be generalized for edit errors.
	\bibliographystyle{IEEEtran}
	\bibliography{ref}
\end{document}